\documentclass[envcountsect, envcountsame]{llncs}

\usepackage{makeidx}
\usepackage{textcomp}
\usepackage{amsmath,amssymb}
\usepackage{graphicx,graphics}
\usepackage{psfrag}
\usepackage{epsfig}

\newcommand{\N}{\ensuremath{\mathbb{N}}}

\newcommand{\luar}{\ensuremath{\leftarrow_{\rm u.a.r.}}}
\newcommand{\poly}{\ensuremath{{\textup{\rm poly}}}}

\newcommand{\vol}[3]{\ensuremath{{\rm vol}^{(#1)}(#2,#3)}}
\newcommand{\ball}[3]{\ensuremath{B^{(#1)}_{#2}(#3)}}

\newcommand{\schoening}{\texttt{\textup{Sch\"oning}}}
\newcommand{\dantsin}{\texttt{\textup{det-search}}}

\newcommand{\csps}{\texttt{\textup{Sch\"oning}}}
\newcommand{\cspsrun}{\texttt{\textup{One-Sch\"oning-Run}}}
\newcommand{\cspsb}{\texttt{\textup{searchball}}}
\newcommand{\gcspsb}{\texttt{\textup{G-searchball}}}
\newcommand{\ignore}[1]{}

\usepackage{cancel}
\usepackage{algorithm}
\usepackage{algorithmic}

\newcommand{\BCSP}{\ensuremath{\textsc{Ball-CSP}}}

\begin{document}
\frontmatter
\pagestyle{headings}

\title{Using a Skewed Hamming Distance to Speed Up Deterministic
  Local Search} \author{Dominik Scheder}
\institute{Theoretical Computer Science, ETH Z\"urich\\
  CH-8092 Z\"urich, Switzerland\\
  \email{dscheder@inf.ethz.ch}}

\maketitle

\begin{abstract}
  Sch\"oning~\cite{Schoening99} presents a simple randomized algorithm
  for $(d,k)$-CSP problems with running time
  $\left(\frac{d(k-1)}{k}\right)^n \poly(n)$. Here, $d$ is the number
  of colors, $k$ is the size of the constraints, and $n$ is the number
  of variables. A derandomized version of this, given by Dantsin et
  al.~\cite{dantsin}, achieves a running time of
  $\left(\frac{dk}{k+1}\right)^n \poly(n)$, inferior to Sch\"oning's.
  We come up with a simple modification of the deterministic
  algorithm, achieving a running time of $\left(\frac{d(k-1)}{k} \cdot
    \frac{k^d}{k^d-1} \right)^n \poly(n)$.  Though not completely
  eleminating the gap, this comes very close to the randomized bound
  for all but very small values of $d$.  Our main idea is to define a
  graph structure on the set of $d$ colors to speed up local search.
  \end{abstract}

\section{Introduction}

Constraint Satisfaction Problems, short CSPs, are a generali\-zation of
both bool\-ean satisfiability and the graph $k$-colorability problem.  A
set of $n$ variables $x_1,\dots,x_n$ is given, each of which can take
a value from $[d] := \{1,\dots,d\}$. The values $1,\dots,d$ are
sometimes called the {\em colors}. Each coloring of the $n$ variables,
also called assignment, can be represented as an element of $[d]^n$. A
{\em literal} is an expression of the form $(x_i \ne c)$ for some $c
\in [d]$. A CSP {\em formula} consists of a conjunction (AND) of {\em
  constraints}, where a constraint is a disjunction (OR) of literals.
We speak of $(d,k)$-CSP formula if each constraint consists of at most
$k$ literals. Finally, $(d,k)$-CSP is the problem of deciding whether
a given $(d,k)$-CSP formula has a satisfying assignment.\\

In 1999, Uwe Sch\"oning~\cite{Schoening99} came up with an extremely
simple and elegant algorithm for $(d,k)$-CSP: Start with a random
assignment. If this does not satisfy the formula, pick an arbitrary
unsatisfied constraint. From this constraint, pick a literal uniformly
at random, and assign to its underlying variable a new value, again
randomly.  Repeat this reassignment step $O(n)$ times, where $n$ is
the number of variables in the formula. If the formula $F$ is
satisfiable, we find a satisfying assignment with probability at least
\ignore{$\left(\frac{k}{d(k-1)}\right)^n \frac{1}{\poly(n)}$}
$(k/(d(k-1)))^n / \poly(n)$. By repeating this procedure $(d(k-1)/k)^n
\poly(n)$ times, we obtain an exponential Monte Carlo algorithm for
$(d,k)$-CSP which we will call $\schoening$. Not long afterwards, in
2002, Dantsin, Goerdt, Hirsch, Kannan, Kleinberg, Papadimitriou,
Raghavan and Sch\"oning~\cite{dantsin} designed a {\em deterministic}
algorithm based on deterministic local search and covering codes.
This algorithm, henceforth called $\dantsin$, can be seen as an
attempt to derandomize Sch\"oning's random walk algorithm (actually
the authors cover only the case $d=2$, but everything nicely
generalizes to higher $d$). I say {\em attempt} because its running
time of $(dk/(k+1))^n / \poly(n)$
\ignore{$\left(\frac{dk}{k+1}\right)^n \poly(n)$} is worse than that
of $\schoening$. \\

Consider the following variant of $\dantsin$: Suppose $F$ is a
$(d,k)$-CSP formula on $n$ variables, with $d = 2^{\ell}$ being a
power of $2$. Replacing every $d$-ary variable by $\ell$ boolean
variables, we transform $F$ into a $(2,\ell k)$-CSP formula $F'$ over
$\ell n$ variables. We solve $F'$ using the original algorithm
$\dantsin$ for the boolean case. A quick calculation shows that this
already improves over the running time of $(dk/(k+1))^n / \poly(n)$.
This observation motivates a more systematic exploration of possible
ways to speed up $\dantsin$. The main contribution of this paper is a
modified $\dantsin$ algorithm, which achieves a significantly better
running time.  Both $\schoening$ and $\dantsin$ work by locally
exploring the {\em Hamming graph} on $[d]^n$, in which two assignments
are connected by an edge if they differ on exactly one variable. We
define a graph $G$ on the set $\{1,\dots,d\}$ of colors, thus
obtaining a different, sparser graph on $[d]^n$, the $n$-fold
Cartesian product $G^{\Box n}$: Two assignments are connected by an
edge if they differ on exactly one variable, and on that variable, the
two respective colors are connected by an edge in $G$. With $G=K_d$,
this is the Hamming graph on $[d]^n$. Taking $G$ to be the directed
cycle on $d$ vertices, it turns out that our modified deterministic
algorithm has a running time of
$$
\left(\frac{d(k-1)}{k} \cdot \frac{k^d}{k^d-1} \right)^n \poly(n) \
.$$ For $d \geq 3$, this is significantly better than $\dantsin$ and
comes very close to $\schoening$ except if $d$ is very small (in
particular, we do not improve the case $d=2$).  We hope that future
research will eventually lead to a complete derandomization. We
compare running times for some values of $d$ and $k$ (ignoring
polynomial
factors in $n$):\\

\begin{center}
\begin{tabular}{c||c|c|c}
  $(d,k)$ & $\schoening$ & $\dantsin$ & this paper\\ \hline 
  $(2,3)$ & $1.334^n$ & $1.5^n$ & $1.5^n$ \\ \hline
  $(3,3)$ & $2^n$   & $2.25^n$ & $2.077^n$ \\ \hline
  $(5,4)$ & $3.75^n$  & $4^n$ & $3.754^n$ 
\end{tabular}
\end{center}

The case of $(2,k)$-CSP, commonly called $k$-SAT, has drawn most
attention, in particular $3$-SAT. For $3$-SAT, $\schoening$ and
$\dantsin$ achieve a running time of $O(1.334^n)$ and $O(1.5^n)$,
respectively. This is still very close to the current records: By
combining $\schoening$ with a randomized algorithm by Paturi,
Pudl\'{a}k, Saks, and Zane~\cite{ppsz}, Iwama and
Tamaki~\cite{iwama-tamaki} achieved a running time of $O(1.3238^n)$.
Later, Rolf~\cite{rolf05} improved the analysis of their algorithm to
obtain the currently best bound of $O(1.32216^n)$.  The algorithm
$\dantsin$ has been improved as well, first to $O(1.481^n)$ by the
same authors, then to $O(1.473^n)$ by Brueggemann and
Kern~\cite{BK04}, and finally to the currently best deterministic
bound of $O(1.465^n)$ by myself~\cite{Scheder08}. Though we do not
improve the case $d=2$ in this paper, we hope that better
understanding of general $(d,k)$-CSP will lead to better algorithms
for $k$-SAT, as well.\\
\ignore{ A common feature of all $k$-SAT algorithms is that the
  exponential term in the running time converges to $2^n$ as $k$
  grows. In fact, the exponential term of the running time of all
  known algorithms for $(d,k)$-CSP converges to $d^n$ for growing $k$.
  It would be a breakthrough to find an algorithm
  for $k$-SAT running in time $O(1.999^n)$, for all values of $k$.\\
}

Another fairly well-investigated case is $k=2$ with $d$ being large.
In 2002, Feder and Motwani~\cite{feder-motwani} adapted a randomized
$k$-SAT algorithm by Paturi, Pudl\'{a}k and Zane~\cite{ppz} to
$(d,2)$-CSP, obtaining a running time of $(c_d d)^n$, with $c_d$
converging to $e^{-1}$ as $d$ grows. We see that the base of the
exponential term is proportial to $d$. A certain growth of the base
with $d$ seems inevitable: Recently, Traxler~\cite{Traxler08} showed
that an algorithm solving $(d,2)$-CSP in time $a^n$, with $a$ being
independent of $d$, could be used to solve $k$-SAT in {\em
  subexponential} time, i.e., $2^{o(n)}$. This would contradict the
 {\em exponential time hypothesis}~\cite{IPZ01}.

\subsection*{Organization of this paper}

In Section~\ref{section-algorithms} we describe $\schoening$ and
$\dantsin$, and analyze the running time of the latter. Although most
of the material of Section~\ref{section-algorithms} is
from~\cite{Schoening99} and~\cite{dantsin}, we chose to present it
here in order to make the paper self-contained. In
Section~\ref{section-g-ball}, we define a graph structure on the set
of colors, which changes the notion of distance on the set $[d]^n$.
Taking this graph to be the directed cycle on $d$ vertices yields a
significant improvement. In In Section~\ref{section-optimality}, we
show that choosing this graph is optimal.

\section{The Algorithms $\schoening$ and $\dantsin$}
\label{section-algorithms}

Sch\"oning's algorithm works as follows. We start with a random
assignment, and for $O(n)$ steps randomly correct it locally. By this
we mean choosing an arbitrary non-satisfied constraint $C$, then
choosing a literal $(x \ne c) \luar C$ (where $\luar$ means choosing
something uniformly at random), and randomly re-coloring $x$ with some
$c' \luar [d]\setminus \{c\}$.
\begin{algorithm}
\caption{\cspsrun($F$: a $(d,k)$-CSP) formula}
\label{csps}
\begin{algorithmic}[1]
  \STATE $\alpha \luar [d]^n$
  \FOR{$i=1,\dots,c n$}
  \STATE // {\em $c$ is a constant depending on $d$ and $k$, but not on $n$}
  \IF{$\alpha$ satisfies $F$} \RETURN $\alpha$
  \ELSE
  \STATE $C \leftarrow $ any constraint of $F$ unsatisfied by $\alpha$  
  \STATE $(x\ne c) \luar C$ \hspace{1cm}// {\em a random literal from $C$}
  \STATE $c' \luar [d] \setminus \{c\}$ \hspace{0.8cm} // 
  {\em choose a new color for $x$}
  \STATE $\alpha \leftarrow \alpha[x := c']$ \hspace{1.2cm} // {\em 
    change the coloring $\alpha$}
  \ENDIF
  \ENDFOR
  \RETURN \texttt{unsatisfiable}
\end{algorithmic}
\end{algorithm}

\begin{theorem}[\cite{Schoening99}]
  If $F$ is a satisfiable $(d,k)$-CSP formula on $n$ variables, then
  \cspsrun\ returns a satisfying assignment with probability at least
  $$\left(\frac{d(k-1)}{k}\right)^{-n} \frac{1}{\poly(n)}\ .$$
\end{theorem}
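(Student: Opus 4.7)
The plan is to fix an arbitrary satisfying assignment $\alpha^*$ and track the Hamming distance $D_t := H(\alpha_t, \alpha^*)$ throughout the run. Two ingredients combine: the distribution of $D_0$ coming from the random start, and a lower bound on the probability that the local-correction walk reaches $\alpha^*$ from a given starting distance.

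First, since $\alpha_0$ is drawn uniformly from $[d]^n$, we have $\pr[D_0 = j] = \binom{n}{j}(d-1)^j/d^n$. Second, I would analyze one reassignment step: when the current $\alpha$ fails to satisfy the chosen constraint $C$, the satisfying $\alpha^*$ must disagree with $\alpha$ on at least one of the (at most $k$) variables of $C$. Hence a uniformly random literal $(x \ne c) \luar C$ lies on a variable where $\alpha$ and $\alpha^*$ disagree with probability at least $1/k$; conditional on this, the new color $c' \luar [d]\setminus\{c\}$ equals $\alpha^*(x)$ with probability exactly $1/(d-1)$, since $\alpha^*(x) \ne c$. Thus $D$ drops by one with probability at least $p := \frac{1}{k(d-1)}$, and pessimistically we treat every other step as increasing $D$ by one.

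Third, I would lower bound the probability that this one-dimensional biased walk on $\{0,1,\dots,n\}$ hits $0$ within $O(n)$ steps, starting from $D_0 = j$. A standard computation — exhibiting trajectories of a well-tuned length $\ell j$ consisting of $(\ell+1)j/2$ down-steps and $(\ell-1)j/2$ up-steps, and applying Stirling to the resulting binomial coefficient — yields
\[
\pr[\text{walk hits } 0 \mid D_0 = j] \;\ge\; \frac{1}{\poly(n)}\left(\frac{1}{(d-1)(k-1)}\right)^{j}.
\]
I expect this to be the main obstacle: the single-step bound $p = 1/(k(d-1))$ alone is too weak, and only the combinatorial blow-up coming from the many admissible trajectories supplies the missing factor of $k/(k-1)$ per unit of distance. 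One has to check that the optimal trajectory length keeps the total number of steps linear in $n$ so that the loop bound $cn$ in \cspsrun\ suffices, and that Stirling's error terms are absorbed into the polynomial factor.

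Finally, I would average over the initial distance using the binomial theorem:
\[
\pr[\text{success}] \;\ge\; \sum_{j=0}^{n} \binom{n}{j}\frac{(d-1)^j}{d^n}\cdot\frac{1}{\poly(n)}\left(\frac{1}{(d-1)(k-1)}\right)^{j} \;=\; \frac{1}{\poly(n)\cdot d^n}\left(1+\frac{1}{k-1}\right)^{n} \;=\; \frac{1}{\poly(n)}\left(\frac{k}{d(k-1)}\right)^{n},
\]
which is exactly the claimed bound $\left(\frac{d(k-1)}{k}\right)^{-n}/\poly(n)$.
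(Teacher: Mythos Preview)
Your outline is the standard Sch\"oning analysis and the final averaging step is correct, but there is a genuine gap in the middle. After observing that $D$ drops by one with probability at least $p=\frac{1}{k(d-1)}$, you write ``pessimistically we treat every other step as increasing $D$ by one'' and then claim the walk hits $0$ from distance $j$ with probability at least $\bigl(\frac{1}{(d-1)(k-1)}\bigr)^j/\poly(n)$. These two statements are incompatible for $d\ge 3$: a nearest-neighbour walk that goes down with probability $p$ and up with probability $1-p$ reaches $0$ from $j$ with probability at most $\bigl(\frac{p}{1-p}\bigr)^j=\bigl(\frac{1}{k(d-1)-1}\bigr)^j$, and no choice of trajectory length $\ell j$ with Stirling can beat this. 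Since $k(d-1)-1>(d-1)(k-1)$ whenever $d\ge 3$, your pessimistic walk simply cannot produce the exponent you need; plugging $\bigl(\frac{1}{k(d-1)-1}\bigr)^j$ into your binomial sum gives a base strictly larger than $\frac{d(k-1)}{k}$.

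The fix is not to lump the ``stay'' event together with ``up''. If $\alpha$ violates $C$ then $\alpha(x)=c$ for every literal $(x\ne c)\in C$. Picking a literal that $\alpha^*$ does \emph{not} satisfy means $\alpha^*(x)=c=\alpha(x)$, so recolouring forces $D$ up by one; this happens with probability at most $\frac{k-1}{k}$. Picking a literal that $\alpha^*$ satisfies and then a wrong colour $c'\ne\alpha^*(x)$ leaves $D$ unchanged. Thus the correct coupled walk has down-probability $\ge\frac{1}{k(d-1)}$, up-probability $\le\frac{k-1}{k}$, and a nonzero stay-probability. Conditioning on a non-stay step (or solving the recurrence $P_j=\frac{1}{k(d-1)}P_{j-1}+\frac{d-2}{k(d-1)}P_j+\frac{k-1}{k}P_{j+1}$ directly) gives $P_j\ge\bigl(\frac{1/(k(d-1))}{(k-1)/k}\bigr)^j=\bigl(\frac{1}{(d-1)(k-1)}\bigr)^j$, which is exactly what you need. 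Only for $d=2$ is the stay-probability zero and your pessimistic shortcut harmless.
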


By repeating $\cspsrun$ \ignore{$\left(\frac{d(k-1)}{k}\right)^n
  \poly(n)$} $(d(k-1)/k))^n \poly(n)$, times, we find a satisfying
assignment with high probability. This yields a randomized Monte-Carlo
algorithm of running time $(d(k-1)/k))^n \poly(n)$,
\ignore{$\left(\frac{d(k-1)}{k}\right)^n \poly(n)$,} which we call $\csps$. \\

Let us describe the algorithm $\dantsin$ from~\cite{dantsin}. We
define a parametrized version of CSP, called \BCSP. Given a
$(d,k)$-CSP formula $F$, an assignment $\alpha \in [d]^n$ and an $r \in \N_0$,
does there exist a satisfying assignment $\beta$ such that
$d_H(\alpha,\beta)\leq r$? Here,
$$
d_H(\alpha,\beta) = | \{1 \leq i \leq n \ | \ \alpha(x_i) \ne \beta(x_i)\}|
$$
is the {\em Hamming distance}, and  
$$
\ball{d}{r}{\alpha} := \{\beta \in [d]^n \ | \ d_H(\alpha, \beta)\leq r\}
$$
is the {\em Hamming ball} of radius $r$ around $\alpha$. In other
words, \BCSP\ asks whether $B_r(\alpha)$ contains a satisfying
assignment. The algorithm \cspsb\ solves it in time $ (k(d-1))^r
\poly(n)$.
\begin{algorithm}
\caption{\cspsb(CSP formula $F$, assignment $\alpha$, radius $r$)}
\label{cspsb}
\begin{algorithmic}[1]
  \IF{$\alpha$ satisfies $F$} \RETURN \texttt{true}
  \ELSIF{$r = 0$} \RETURN \texttt{false}
  \ELSE
  \STATE $C \leftarrow $ any constraint of $F$ unsatisfied by $\alpha$  
  \FOR{$(x \ne c) \in C$}
  \FOR{$c' \in [d]\setminus c$}
  \STATE $\alpha' \leftarrow \alpha[x := c']$
  \IF{$\cspsb(F, \alpha', r-1) = \texttt{true}$} \RETURN \texttt{true}
  \ENDIF
  \ENDFOR
  \ENDFOR
  \RETURN \texttt{false}
  \ENDIF
\end{algorithmic}
\end{algorithm}
To show correctness, suppose the ball $\ball{d}{r}{\alpha}$ contains a
satisfying assignment $\beta$, and let $C$ be a constraint not
satisfied by $\alpha$. At least one literal $(x \ne c) \in C$ is
satisfied by $\beta$, and in one iteration of the inner for-loop, the
algorithm will change the assignment $\alpha$ to $\alpha'$ such that
$\alpha'(x) = \beta(x)$, and therefore $d_H(\alpha', \beta) =
d_H(\alpha,\beta)-1 \leq r-1$, and at least one recursive call will be
successful. The running time of this algorithm is easily seen to be at
most $(k(d-1))^r \poly(n)$, as each call causes at most $k(d-1)$
recursive calls (see Figure~\ref{csp-branching-1} for an
illustration), and
takes a polynomial number of steps itself.\\
\begin{figure}
  \begin{center}
    \epsfig{file=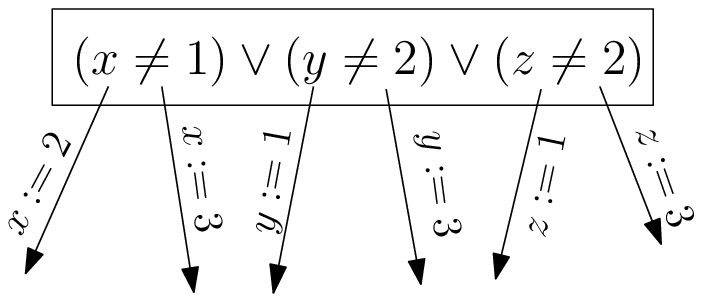,width=0.4\textwidth}
  \end{center}
  \caption{$\cspsb$ branching on a constraint of a $(3,3)$-CSP formula.}
  \label{csp-branching-1}
\end{figure}

\subsection*{Covering Codes}
How can we turn this algorithm into an algorithm for searching $[d]^n$
for a satisfying assignment? Suppose somebody gives us a 
set $\mathcal{C} \subseteq [d]^n$ such that
$$
 \bigcup_{\alpha \in \mathcal{C}} \ball{d}{r}{\alpha} = [d]^n \ ,
$$
i.e. a {\em code of covering radius $r$}. By calling
$\cspsb(F,\alpha,r)$ for each $\alpha \in \mathcal{C}$, we can decide
whether $[d]^n$ contains a satisfying assignment for $F$ in time
\begin{eqnarray}
|\mathcal{C}| (k(d-1))^r \poly(n) \ .
\label{running-time-1}
\end{eqnarray}
By symmetry of the cube $[d]^n$, the cardinality of
$\ball{d}{r}{\alpha}$ does not depend on $\alpha$, and we define
$\vol{d}{n}{r} := |\ball{d}{r}{\alpha}|$. The following lemma gives
bounds on the size of a covering code $\mathcal{C}$.
\begin{lemma}[\cite{dantsin}]
  For all $n,d,r$, every code $\mathcal{C}$ of covering radius $r$ has
  at least $\frac{d^n}{\vol{d}{n}{r}}$ elements. Furthermore, there is
  such a $\mathcal{C}$ with
  $$
  |\mathcal{C}| \leq \frac{[d]^n}{\vol{d}{n}{r}} \poly(n) \ ,
  $$
  and furthermore, $\mathcal{C}$ can be constructed deterministically
  in time $|\mathcal{C}| \poly(n)$.
\label{lemma-CC}
\end{lemma}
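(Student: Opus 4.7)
The lower bound is the straightforward sphere-covering count: since the balls $\ball{d}{r}{\alpha}$ for $\alpha\in\mathcal{C}$ cover the entire cube and each has cardinality $V:=\vol{d}{n}{r}$, summing gives $|\mathcal{C}|\cdot V\geq d^n$.

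For the existence of a code meeting this bound up to a polynomial factor, my plan is to use the first moment method. Sample $N:=\lceil (d^n/V)\,n\ln d\rceil$ codewords uniformly and independently from $[d]^n$. For any fixed $\gamma\in[d]^n$, the probability that $\gamma$ is not covered equals $(1-V/d^n)^N\leq e^{-NV/d^n}\leq d^{-n}$, so by linearity the expected number of uncovered points is strictly less than $1$. Hence some outcome of the sampling is a covering code of size $N=(d^n/V)\poly(n)$.

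To obtain a \emph{deterministic} polynomial-time construction I would use a block-decomposition strategy. Partition the $n$ coordinates into $m=\lceil n/b\rceil$ blocks of size at most $b$, where $b$ is picked so that $d^b$ is polynomial in $n$ (say $b=\lfloor\log_d n\rfloor$). Split the radius budget into integers $r_1,\dots,r_m$ with $\sum_i r_i=r$ and $r_i$ proportional to the block size. On each block, I would compute a near-optimal covering code $\mathcal{C}_i\subseteq[d]^b$ of radius $r_i$ by running a deterministic greedy set-cover procedure over the polynomial-size ground set $[d]^b$; standard set-cover analysis yields $|\mathcal{C}_i|$ within an $O(b\log d)$-factor of the per-block optimum, and the per-block optimum matches its own volume lower bound via the probabilistic argument above applied inside the block. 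Output the Cartesian product $\mathcal{C}:=\mathcal{C}_1\times\cdots\times\mathcal{C}_m$; it is a covering code of $[d]^n$ with radius $\sum_i r_i=r$ because Hamming distance decomposes coordinatewise, and it can be enumerated in time $|\mathcal{C}|\poly(n)$.

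The main obstacle is verifying the quantitative bound $|\mathcal{C}|=\prod_i|\mathcal{C}_i|\leq(d^n/V)\poly(n)$. This requires comparing $\prod_i\vol{d}{b}{r_i}$ with $V$: by a Stirling/entropy estimate, both quantities have the same dominant exponential term (roughly $d^{nH(r/n)}$) and differ only by Gaussian-type corrections, while the $(O(b\log d))^m$ overhead from the per-block greedy approximation has to be tracked through the choice of $b$. The book-keeping between these two loss terms is routine but delicate, and is where I would expect to spend most of the effort.
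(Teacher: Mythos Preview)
Your overall strategy---sphere-covering for the lower bound, random sampling for existence, and block decomposition plus greedy Set Cover for the deterministic construction---is exactly the paper's. The gap is in the block parameters, precisely at the step you flag as ``routine but delicate'': with your choice it is not routine, and in fact it does not close. Taking $b=\lfloor\log_d n\rfloor$ gives $m\approx n\log d/\log n$ blocks. The greedy Set Cover contributes a multiplicative $O(b\log d)=O(\log n)$ factor per block, so the accumulated overhead is $(O(\log n))^{m}=\exp\bigl(\Theta(n\log\log n/\log n)\bigr)$, which is superpolynomial. The same blow-up afflicts the volume comparison: the $1/\poly(b)$ Stirling correction per block, raised to the $m$th power, is again not polynomial. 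No choice of $b$ in the regime $d^b=\poly(n)$ avoids this, since $m$ must then be of order $n/\log n$.

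The paper's construction (given in the appendix, for the more general $G$-distance) inverts your parametrisation: it takes a \emph{constant} number $b$ of blocks, each of length $n/b$. The greedy algorithm on one block then runs over a ground set of size $d^{n/b}$, which is exponential in $n$; but one chooses the constant $b$ large enough that $d^{n/b}$ is dominated by the final code size $|\mathcal{C}|$, so this cost is absorbed into the claimed $|\mathcal{C}|\poly(n)$ running time. With only $b=O(1)$ blocks, the per-block $\poly(n)$ overhead raised to the $b$th power stays $\poly(n)$, and the volume lower bound (Lemma~\ref{lower-bound-binom}) is applied directly at the block level and simply multiplied, so no separate comparison of $\prod_i\vol{d}{b}{r_i}$ with $\vol{d}{n}{r}$ is ever needed.
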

This lemma, together with (\ref{running-time-1}), yields a
deterministic algorithm solving $(d,k)$-CSP in time
$\frac{d^n}{\vol{d}{n}{r}} (k(d-1))^r \poly(n)$, and we are free to
choose $r$.  At this point, Dantsin et al. use the estimate
$\vol{2}{n}{r} = \sum_{i=0}^r {n \choose i} \geq 2^{nH(r/n)} /
\poly(n)$, where $H(x)$ is the binary entropy function (see
MacWilliams, Sloane~\cite{MacWilliams-Sloane77}, Chapter 10, Corollary
9, for example), but we prefer to derive the bounds we need ourselves,
first because the calculations involved are simpler, and second
because our method easily generalizes to the volume of more
complicated balls we will define in the next section.  We use
generating functions, which are a well-established tool for
determining the asymptotic growth of certain numbers (cf. the book
{\em generatingfunctionology}~\cite{gfology}).
\begin{lemma}
  For any $n,d \in \N$ and $x \geq 0$, there is an $r \in
  \{0,1,\dots,n\}$ such that
  $$\vol{d}{n}{r} \geq \frac{1}{n+1} 
  \frac{(1+(d-1)x)^n}{x^r} \ .$$
\label{lower-bound-binom}
\end{lemma}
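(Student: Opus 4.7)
The plan is to interpret $\vol{d}{n}{r}$ as a partial sum of the coefficients of a specific generating polynomial, and then invoke a pigeonhole argument on the terms of that polynomial.

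First, I would observe that the Hamming ball $\ball{d}{r}{\alpha}$ decomposes into Hamming spheres: the number of points in $[d]^n$ at Hamming distance exactly $i$ from $\alpha$ equals $\binom{n}{i}(d-1)^i$ (choose which $i$ coordinates differ, then one of $d-1$ alternative colors for each). Hence
\[
\vol{d}{n}{r} \;=\; \sum_{i=0}^{r} \binom{n}{i}(d-1)^i.
\]
Next, I would notice that these sphere sizes are precisely the coefficients in the binomial expansion
\[
\bigl(1 + (d-1)x\bigr)^n \;=\; \sum_{i=0}^{n} \binom{n}{i}(d-1)^i\, x^i.
\]

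The right-hand side is a sum of $n+1$ nonnegative terms (since $x \geq 0$), so by an averaging / pigeonhole argument there must exist an index $r \in \{0,1,\dots,n\}$ for which the $r$-th term dominates the average:
\[
\binom{n}{r}(d-1)^r\, x^r \;\geq\; \frac{1}{n+1}\bigl(1+(d-1)x\bigr)^n.
\]
This is the $r$ I would pick. Since the $r$-th sphere size is just one of the (nonnegative) summands in $\vol{d}{n}{r}$, we trivially have $\vol{d}{n}{r} \geq \binom{n}{r}(d-1)^r$, and dividing through by $x^r$ in the displayed inequality gives the desired bound.

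There is no real obstacle here: the only subtlety is the degenerate case $x=0$, where $x^r$ appears in the denominator of the claimed bound. This is handled by choosing $r=0$, in which case $\vol{d}{n}{0}=1$ and the right-hand side is $\frac{1}{n+1}$, so the inequality holds trivially (with the convention $0^0=1$). All other cases are covered uniformly by the pigeonhole step above.
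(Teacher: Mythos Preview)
Your proof is correct and is essentially the same as the paper's: both expand $(1+(d-1)x)^n$ via the binomial theorem, pick $r$ to be an index where the term $\binom{n}{r}(d-1)^r x^r$ is maximal (equivalently, at least the average of the $n+1$ nonnegative terms), and then use $\vol{d}{n}{r} \geq \binom{n}{r}(d-1)^r$. Your explicit treatment of the degenerate case $x=0$ is a small addition the paper omits.
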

\begin{proof}
  We write down the generating function for the sequence
  $\left({n \choose i}(d-1)^i\right)_{i=0}^{n}\ $:
  $(1 + (d-1)x)^n = \sum_{i=0}^n {n \choose i} (d-1)^i x^i$. This sum
  involves $n+1$ terms, the maximum being attained at $i = r$
  for some $i \in \{0,\dots,n\}$.
  Thus $(1 + (d-1)x)^n \leq (n+1) {n
    \choose r} (d-1)^r x^r$. Using 
  $\vol{d}{n}{r} \leq {n \choose r}(d-1)^r$ and re-arranging
  terms yields the claimed bound. \qed
\end{proof}

\begin{theorem}
  There is a deterministic algorithm solving $(d,k)$-CSP in 
  time $\left(\frac{dk}{k+1}\right)^n \poly(n)$.
\label{det-normal-running-time}
\end{theorem}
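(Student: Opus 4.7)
The plan is to plug together the $\cspsb$ algorithm, the covering-code construction of Lemma~\ref{lemma-CC}, and the volume bound of Lemma~\ref{lower-bound-binom}, and then choose a single parameter so that the unknown ``best'' radius drops out of the running time. For any radius $r \in \{0,1,\dots,n\}$, building a covering code $\mathcal{C}$ of radius $r$ and calling $\cspsb(F,\alpha,r)$ on every $\alpha \in \mathcal{C}$ decides satisfiability in time
$$T(r) = \frac{d^n}{\vol{d}{n}{r}} (k(d-1))^r \poly(n).$$
I would have the algorithm simply try every value $r \in \{0,1,\dots,n\}$; since there are only $n+1$ of them, this costs at most another $\poly(n)$ factor. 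Hence it is enough to exhibit \emph{one} radius $r$ for which $T(r) \leq (dk/(k+1))^n \poly(n)$.

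The key move is to apply Lemma~\ref{lower-bound-binom} with the specific value $x := 1/(k(d-1))$. This is chosen precisely so that $k(d-1)\,x = 1$; consequently, when the lemma's lower bound on $\vol{d}{n}{r}$ is substituted into $T(r)$, the factors $(k(d-1))^r$ from the running time and $1/x^r$ from the volume bound cancel \emph{for every} $r$, and the dependence on the unknown optimal radius disappears. With this $x$ one has $1+(d-1)x = (k+1)/k$, and the lemma guarantees some $r^* \in \{0,1,\dots,n\}$ with
$$\vol{d}{n}{r^*} \;\geq\; \frac{1}{n+1}\cdot\frac{((k+1)/k)^n}{(1/(k(d-1)))^{r^*}} \;=\; \frac{(k(d-1))^{r^*}}{n+1}\left(\frac{k+1}{k}\right)^n.$$

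Substituting this estimate into the formula for $T(r^*)$, the factors $(k(d-1))^{r^*}$ cancel and $((k+1)/k)^n$ pulls out of the denominator, giving $T(r^*) \leq (dk/(k+1))^n \poly(n)$, which is the claimed bound. There is not really a ``main obstacle'' in this argument once the two preparatory lemmas are available; the only piece of cleverness is spotting that $x = 1/(k(d-1))$ is the unique choice that kills the $r^*$-dependence, thereby converting the existential statement of Lemma~\ref{lower-bound-binom} into a uniform upper bound that the algorithm can exploit without actually knowing $r^*$ in advance.
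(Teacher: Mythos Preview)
Your proof is correct and follows essentially the same route as the paper: set $x = 1/(k(d-1))$, invoke Lemma~\ref{lower-bound-binom} to obtain an $r$ for which the $(k(d-1))^r$ and $x^r$ factors cancel, and read off $(dk/(k+1))^n \poly(n)$. The only cosmetic difference is that you propose looping over all $r$ rather than computing the maximizing $r$ directly; either works and costs at most a $\poly(n)$ factor.
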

\begin{proof}
  Choose $x := (k(d-1))^{-1}$ and apply
  Lemma~\ref{lower-bound-binom}. The lemma gives us some $r \in
  \{0,\dots,n\}.$ With this radius, the running time 
  in (\ref{running-time-1}) is at most
  \begin{eqnarray*}
    & & \frac{d^n}{\vol{d}{n}{r}} (k(d-1))^r \poly(n)  = 
    \frac{d^n x^r (k(d-1))^r}{(1+(d-1)x)^n} \poly(n)\\
    & = & \left(\frac{d}{1+(d-1)\frac{1}{k(d-1)}}\right)^n \poly(n)
    = \left(\frac{dk}{k+1}\right)^n\poly(n) \ .
  \end{eqnarray*}
  \qed
\end{proof}
Let us summarize the algorithm $\dantsin$: It first constructs a code
of appropriate covering radius, then calls $\cspsb$ for every element
in the code. Its running time is larger than that of $\schoening$,
since \ignore{$\frac{dk}{k+1} \geq \frac{d(k-1)}{k}$} $dk/(k+1) \geq
d(k-1)/k$.  \ignore { For large $k$, the difference decreases, which
  is not surprising, as both running times converge to the trivial
  upper bound of $d^n$. For small $k$ and any $d$, the difference is
  significant. In the next section, we introduce a simple change to
  the deterministic algorithm that almost closes this gap, at least
  for all not too small values of $d$.  }

\ignore{
\section{A Cheap Trick}

Any $(4,k)$-CSP formula with $n$ variables can be encoded as a
$(2,2k)$-CSP formula with $2n$ variables: Just replace each $4$-ary
variable $x$ by two binary variables $x_1, x_2$, encoding the colors
$1,2,3,4$ by two bits in some way. A $k$-constraint involving $4$-ary
variables becomes a $2k$-constraint involving binary variables. In
other words, we obtain a $2k$-CNF formula. By
Theorem~\ref{det-normal-running-time}, we can solve $(2,2k)$-CSP
in time
\begin{eqnarray}
\left(\frac{4k}{2k+1}\right)^{2n} \poly(n) \ .
\label{eqn-4-using-cube}
\end{eqnarray}
For example, for $k=3$ this is $\left(\frac{144}{49}\right)^n \poly(n)
\approx 2.939^n \poly(n)$, better than the running time of $3^n
\poly(n)$ we obtain we directly applying
Theorem~\ref{det-normal-running-time} to a $(4,3)$-CSP formula. Not a great
improvement, but one that has been obtained in a surprisingly simple
way. Clearly, the same trick works whenever $d$ is a power of $2$:
\begin{theorem}
  Suppose $d = 2^\ell$ for some $\ell \in \N$. Then $(d,k)$-CSP can be
  solved in time $\left(\frac{2\ell k}{\ell k +1}\right)^{\ell n}
  \poly(n)$.
\label{encode-binary}
\end{theorem}
It is not difficult to check that $\left(\frac{2 \ell k}{\ell k +
    1}\right)^\ell < \frac{dk}{k+1}$ for any $\ell \geq 2$, and thus
Theorem~\ref{encode-binary} is an improvement over
Theorem~\ref{det-normal-running-time} whenever $d$ is a power of $2$
and at least $4$. This is surprising, since by replacing a
$2^\ell$-ary variable $x$ by $\ell$ binary variables
$x_1,\dots,x_\ell$, we are throwing away a lot of information: A
clause contains $x_i$ if and only if it also contains all
$x_1,\dots,x_\ell$, but our $(2,\ell k)$-CSP algorithm does not use
this fact at all. Two obvious question arise: First, can we improve
Sch\"oning's algorithm in a similar fashion? Second, what do we do if
$d$ is not a power of $2$? The first question is answered quickly:
Sch\"oning's running time on a $(2^\ell, k)$-CSP formula with $n$
variables is $\left(\frac{2^\ell (k-1)}{k}\right)^n \poly(n)$. Using
binary encoding, this becomes $\left(\frac{2 (\ell k - 1)}{\ell
    k}\right)^{\ell n} \poly(n)$.  Again, a simple calculation shows
that $\left(\frac{2 (\ell k - 1)}{\ell k}\right)^\ell >
\frac{d(k-1)}{k}$, and thus binary encoding actually makes
Sch\"oning's algorithm worse. The second question is more difficult:
If $d$ is not a power of $2$, say $d=3$, how can we encode our
variables? Using two binary variables to encode one ternary variable
is wasteful and does not lead to an improvement. It turns out that it
makes sense to again study $(4,k)$-CSP, but this time not applying any
explicit encoding in
binary.\\

Suppose our algorithm $\cspsb$ runs on a $(4,3)$-CSP formula and
encounters the unsatisfied constraint $(x \ne 1) \vee (y \ne 2) \vee
(z \ne 4)$.  It causes $k(d-1)=9$ recursive calls, exhausting all
possibilities to change the assignment at one variable occurring in
the constraint. See Figure~\ref{csp-branching-2} for an illustration.
\begin{figure}
  \begin{center}
    \epsfig{file=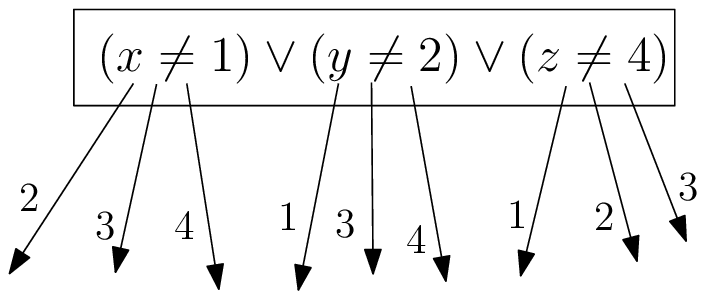,width=0.5\textwidth}
  \end{center}
  \caption{A typical branching of $\cspsb$ on a $(4,3)$-CSP formula.}
  \label{csp-branching-2}
\end{figure}
How does the branching look after applying binary encoding? Suppose we
encode the values $1$,$2$,$3$, and $4$ by $01$, $10$, $11$, and $00$,
respectively.  The first literal $(x \ne 1)$ translates into $(x_1 \ne
0) \vee (x_2 \ne 1)$, and the algorithm in one branch changes
$x_1$ to $1$, and in the other one changes $x_2$ to $0$. In other
words, it changes $x$ to $3$ and $x$ to $4$, but does not change $x$
to $2$ in any branch. The branching now looks like in
Figure~\ref{csp-branching-3}.
\begin{figure}
  \begin{center}
    \epsfig{file=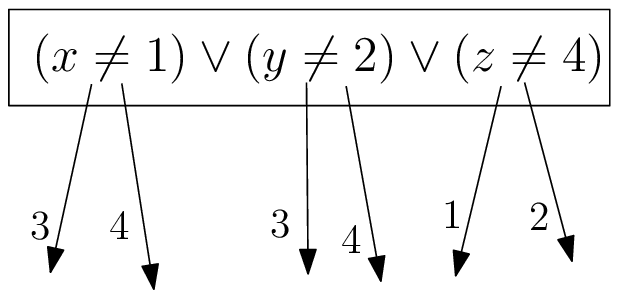,width=0.5\textwidth}
  \end{center}
  \caption{Branching on a $(4,3)$-CSP formula using binary encoding.}
  \label{csp-branching-3}
\end{figure}
Obviously we eliminated $3$ recursive calls. Rather than vieweing this
from the point of binary encoding, we can view it like this: the
colors $3$ and $4$ are neighbors of color $1$, but $2$ is not a
neighbor of $1$. Therefore, the first literal $(x \ne 1)$ causes no
call with $x$ being set to $2$. We only change colors to neighboring
colors.  The neighborhood relation between the colors $1$,$2$,$3$, and
$4$ can be represented by a graph, see Figure~\ref{C-4}:
\begin{figure}
  \begin{center}
    \epsfig{file=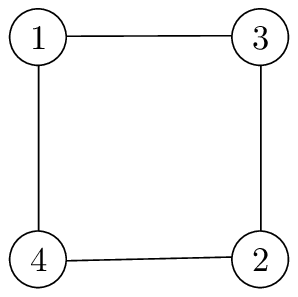,width=0.2\textwidth}
  \end{center}
  \caption{The neighborhood relation between the colors $1,2,3,4$ when
    using binary encoding.}
  \label{C-4}
\end{figure}

Reducing the number of neighbors clearly speeds up $\cspsb$. However,
as there is no free lunch, it reduces the number of elements at
distance $r$, i.e., the volume of the Hamming balls. Apparently, the
favorable effect is stronger, and we can improve upon the running
time. Since we are altering the notion of neighbors and thus of 
distance, we should define everything formally.
}

\section{$G$-Distance, $G$-Balls, and $\gcspsb$}
\label{section-g-ball}

Let $[d]$ be the set of colors, and let $G = ([d],E)$ be a (possibly
directed) graph. For two colors $c, c'$, we denote by $d_G(c,c')$ the
length of a shortest path from $c$ to $c'$ in $G$. If $G$ is directed,
this is not necessarily a metric, and therefore we rather call it a
{\em distance function}. It gives rise to a distance function on $[d]^n$:
For two assignments $\alpha, \beta \in [d]^n$, we define
\begin{eqnarray}
  d_G(\alpha, \beta) = \sum_{i=1}^{n} d_G(\alpha_i,\beta_i)  \ .
\end{eqnarray}
This is the shortest-path distance on the $n$-fold Cartesian product
$G^{\Box n}$. This distance induces the notion of balls
$\ball{G}{r}{\alpha} := \{\beta \in [d]^n \ | \ d_G(\alpha,\beta) \leq
r\}$, and of {\em dual balls} $\{\beta \in [d]^n \ | \
d_G(\beta,\alpha) \leq r\}$.  If $G$ is undirected, balls and dual
balls coincide, and for $G$ being $K_d$, the complete undirected
graph, $d_G$ is simply the Hamming distance.  If $G$ is
vertex-transitive (and possibly directed), the cardinality
$|\ball{G}{r}{\alpha}|$ does not depend on $\alpha$, and we define
$\vol{G}{n}{r} := |\ball{G}{r}{\alpha}|$. By double-counting, this is
also the cardinality of dual balls. In particular, a vertex-transitive
graph is {\em regular}. Let $\delta$ denote the number of edges
leaving each vertex in $G$. As before, we define a parametrized
problem: Given $F$, $\alpha$ and $r$, does $\ball{G}{r}{\alpha}$
contain a satisfying assignment?  Algorithm~\ref{gcspsb}, almost
identical to Algorithm~\ref{cspsb}, solves this problem in time
$(\delta k)^r \poly(n)$.
\begin{algorithm}
\caption{\gcspsb(CSP formula $F$, assignment $\alpha$, radius $r$)}
\label{gcspsb}
\begin{algorithmic}[1]
  \IF{$\alpha$ satisfies $F$} \RETURN \texttt{true}
  \ELSIF{$r = 0$} \RETURN \texttt{false}
  \ELSE
  \STATE $C \leftarrow $ any constraint of $F$ unsatisfied by $\alpha$  
  \FOR{$(x \ne c) \in C$}
  \FOR{all $c'$ such that $(c,c') \in E(G)$}
  \STATE $\alpha' \leftarrow \alpha[x := c']$
  \IF{$\cspsb(F, \alpha', r-1) = \texttt{true}$} \RETURN \texttt{true}
  \ENDIF
  \ENDFOR
  \ENDFOR
  \RETURN \texttt{false}
  \ENDIF
\end{algorithmic}
\end{algorithm}

\subsection{Covering Codes, Again}

Using the $G$-distance function instead of the Hamming distance also
induces the notion of covering codes. As before, $\mathcal{C}\subseteq
[d]^n$ is a code of covering $G$-radius $r$ if 
$$
\bigcup_{\alpha \in \mathcal{C}} \ball{G}{r}{\alpha} = [d]^n \ .
$$
The following lemma generalizes Lemma~\ref{lemma-CC} to arbitrary
vertex-transitive graphs $G$ on $d$ vertices. The proof does not
introduce any new ideas, and can be found in the appendix.
\begin{lemma}
  Let $d \geq 2$, and let $G$ be a vertex transitive graph on $d$
  vertices. For all $n$ and $0 \leq r \leq n$, every code
  $\mathcal{C}$ of covering $G$-radius $r$ has at least
  $d^n/ \vol{G}{n}{r}$ elements.  Furthermore, there is such a
  $\mathcal{C}$ with
  $$
  |\mathcal{C}| \leq \frac{[d]^n}{\vol{G}{n}{r}} \poly(n) \ ,
  $$
  and $\mathcal{C}$ can be constructed deterministically
  in time $|\mathcal{C}| \poly(n)$.
\label{lemma-G-CC}
\end{lemma}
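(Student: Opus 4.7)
My plan is to mirror the proof of Lemma~\ref{lemma-CC} in~\cite{dantsin}, replacing Hamming balls with $G$-balls throughout. Two elementary observations enable the generalization. First, vertex-transitivity of $G$ lifts to the product graph $G^{\Box n}$, so every $G$-ball has the common volume $\vol{G}{n}{r}$. Second, by double-counting the ordered pairs $(\alpha,\beta) \in [d]^n \times [d]^n$ with $d_G(\alpha,\beta)\leq r$, every dual ball also has cardinality exactly $\vol{G}{n}{r}$.

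The lower bound is then immediate: the union $\bigcup_{\alpha \in \mathcal{C}} \ball{G}{r}{\alpha}$ equals $[d]^n$ yet has size at most $|\mathcal{C}|\cdot \vol{G}{n}{r}$, so $|\mathcal{C}| \geq d^n / \vol{G}{n}{r}$.

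For existence of a near-optimal $\mathcal{C}$, I would draw $N := \lceil (d^n/\vol{G}{n}{r})\cdot n \ln d\rceil + 1$ centers uniformly and independently from $[d]^n$. By the dual-ball identity, a fixed $\beta$ is missed with probability $(1 - \vol{G}{n}{r}/d^n)^N$, so the expected number of uncovered points is at most $d^n \exp(-N\vol{G}{n}{r}/d^n) < 1$, establishing existence of a covering code of size $N = (d^n/\vol{G}{n}{r})\poly(n)$.

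For deterministic construction within time $|\mathcal{C}|\poly(n)$, I would follow the block/product recipe of~\cite{dantsin}: partition the $n$ coordinates into $m$ blocks of size $n_0 = \Theta(\log n)$, distribute the radius budget as $r = r_1 + \cdots + r_m$, and for each block enumerate subsets of $[d]^{n_0}$ in order of increasing size until a covering code $\mathcal{C}_i$ of $G$-radius $r_i$ is found; since $d^{n_0} = \poly(n)$, the brute force runs in polynomial time. The Cartesian product $\mathcal{C} := \mathcal{C}_1 \times \cdots \times \mathcal{C}_m$ is a covering code of $G^{\Box n}$ of $G$-radius $r$ by additivity of $d_G$ across blocks, and the inequality $\vol{G}{n}{r} \geq \prod_i \vol{G}{n_0}{r_i}$ bounds $|\mathcal{C}|/\vol{G}{n}{r}$ by the product of per-block loss factors. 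The main obstacle, exactly as in~\cite{dantsin}, is the parameter balancing: $n_0$ and the $r_i$ must be tuned so that each per-block code is sufficiently close to optimal for the $m$-fold product of loss factors to remain $\poly(n)$. This tuning is carried out in~\cite{dantsin} and needs no modification beyond invoking the two observations of the first paragraph, so I would simply transplant it, with $\vol{d}{n_0}{r_i}$ replaced by $\vol{G}{n_0}{r_i}$.
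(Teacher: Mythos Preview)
Your lower bound and probabilistic existence arguments are correct and match the paper's. The gap is in the deterministic construction.

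First, your brute-force step is not polynomial. With $n_0 = \Theta(\log n)$ you have $d^{n_0} = \poly(n)$, but ``enumerate subsets of $[d]^{n_0}$ in order of increasing size until a covering code is found'' means enumerating subsets of a $\poly(n)$-size universe. The smallest covering code for the block has size $\Theta\bigl(d^{n_0}/\vol{G}{n_0}{r_i}\bigr)$, which is typically polynomial in $n$, and the number of subsets of that size is $\binom{\poly(n)}{\poly(n)}$, i.e.\ superpolynomial. So the per-block search does not run in $\poly(n)$ time.

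Second, your deferral to~\cite{dantsin} for the ``parameter balancing'' does not apply, because the block structure there is the opposite of yours: a \emph{constant} number $b$ of blocks, each of size $n/b$. With $b$ constant, the product of per-block loss factors is $\poly(n)^b = \poly(n)$, and the per-block code is found not by brute force but by running the greedy Set-Cover approximation on the explicit instance with ground set $[d]^{n/b}$; this takes time $\poly(d^{n/b})$, which is absorbed into the final bound by choosing $b$ large enough. That is exactly what the paper does in its appendix proof. In your setup with $m = \Theta(n/\log n)$ blocks, even if you \emph{could} find the optimal code per block, each block incurs at least the $\Theta(n_0)$ loss from the probabilistic bound, and the product $\Theta(\log n)^{\Theta(n/\log n)}$ is not polynomial. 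So both the running time of the block search and the accumulated size overhead break with logarithmic block size; the fix is to switch to constantly many linear-size blocks and replace brute force by the greedy Set-Cover approximation.
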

By calling $\gcspsb(F,\alpha,r)$ for each $\alpha \in \mathcal{C}$,
we can solve $(d,k)$-CSP deterministically in  time
\begin{eqnarray}
\frac{d^n}{\vol{G}{n}{r}} (k\delta)^r \poly(n) \ ,
\label{G-runtime}
\end{eqnarray}
where we are free to choose any vertex-transitive graph $G$ and any
radius $r$. Let us reflect over (\ref{G-runtime}) for a minute.
Taking a graph with many edges results in balls of greater volume,
meaning a smaller $\mathcal{C}$ but spending more time searching each
ball. Taking $G$ to be rather sparse has the opposite effect.  What is
the optimal graph $G$ and the optimal radius $r$?

\subsection{Directed Cycles}

Let us analyze the algorithm using $G = C_d$, the directed cycle on
$d$ vertices. Clearly, $\delta = 1$, and therefore $\gcspsb$ runs in
time $k^r$. This is as fast as we can expect for any strongly
connected graph. What is $\vol{C_d}{n}{r}$?
\begin{lemma}
  For any $n,d \in \N$, and $x \geq 0$, there is an $r \in
  \{0,\dots,(d-1)n\}$ such that $$\vol{C_d}{n}{r} \geq
  \frac{(1+x+\dots+x^{d-1})^n}{x^r} \cdot \frac{1}{\poly(n)}\ .$$
\label{lower-bound-trinom}
\end{lemma}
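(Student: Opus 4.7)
The plan is to mimic the proof of Lemma~\ref{lower-bound-binom} verbatim, with the appropriate generating function for the directed cycle. Since $C_d$ is vertex-transitive, I fix an arbitrary base assignment $\alpha \in [d]^n$ and count how many $\beta$'s lie at $C_d$-distance exactly $s$ from $\alpha$. In the directed cycle $C_d$, from any fixed vertex there is exactly one vertex at each distance $0, 1, \dots, d-1$. Therefore, the number of $\beta \in [d]^n$ with $d_{C_d}(\alpha,\beta) = s$ equals the number of tuples $(s_1,\dots,s_n) \in \{0,\dots,d-1\}^n$ with $s_1 + \dots + s_n = s$, which is precisely the coefficient of $x^s$ in the polynomial
$$
P(x) := (1 + x + x^2 + \dots + x^{d-1})^n \ .
$$

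Write $P(x) = \sum_{s=0}^{(d-1)n} a_s x^s$. This polynomial has exactly $(d-1)n + 1$ terms. Choose $r \in \{0,1,\dots,(d-1)n\}$ so that the term $a_r x^r$ is the largest among all terms of $P(x)$ (for the given value of $x \geq 0$). Then by the simple averaging argument
$$
a_r x^r \ \geq \ \frac{P(x)}{(d-1)n + 1} \ = \ \frac{(1 + x + \dots + x^{d-1})^n}{(d-1)n + 1} \ .
$$
Since the ball $\ball{C_d}{r}{\alpha}$ contains in particular all $\beta$ at distance exactly $r$, we have $\vol{C_d}{n}{r} \geq a_r$. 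Rearranging yields
$$
\vol{C_d}{n}{r} \ \geq \ a_r \ \geq \ \frac{(1 + x + \dots + x^{d-1})^n}{\bigl((d-1)n + 1\bigr) x^r} \ ,
$$
which is exactly the claimed bound with $\poly(n) = (d-1)n + 1$.

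There is essentially no obstacle here: the only conceptual change from Lemma~\ref{lower-bound-binom} is that the generating function $(1+(d-1)x)^n$ (which counts Hamming shells by lumping all $d-1$ non-equal colors into a single term $(d-1)x$) is replaced by $(1+x+\dots+x^{d-1})^n$, which refines this by weighting each color according to its $C_d$-distance from the base color. The averaging step and the estimate $\vol{G}{n}{r} \geq a_r$ go through unchanged, modulo adjusting the number of terms from $n+1$ to $(d-1)n + 1$, both of which are polynomial in $n$.
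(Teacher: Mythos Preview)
Your proof is correct and is essentially identical to the paper's own proof: both identify the coefficients of $(1+x+\dots+x^{d-1})^n$ as the number of assignments at $C_d$-distance exactly $s$, take $r$ to maximize the term $a_r x^r$, and bound $\vol{C_d}{n}{r} \geq a_r$ by the averaging argument over the $(d-1)n+1$ terms. The only differences are notational (the paper writes $T(n,s)$ for your $a_s$).
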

\begin{proof}
  Define $T(n,s):= \vol{C_d}{n}{s} - \vol{C_d}{n}{s-1}$. This is the
  number of assignments having distance {\em exactly} $s$ from a fixed
  assignment $\alpha$. Also, it is the number of vectors $\vec{a} \in
  \{0,\dots,d-1\}^n$ with $\sum_{i=1}^n a_i = s$.  Writing down its
  generating function, we see that $(1 + x + \dots + x^{d-1})^n =
  \sum_{s=0}^{(d-1)n} T(n,s) x^s$. For some $r \in \{0,1,\dots, (d-1)n
  + 1\}$ that maximizes $T(n,r)x^r$, we obtain
  $$
  (1 + x + \dots + x^{d-1})^n  = \sum_{s=0}^{(d-1)n} T(n,s) x^s 
  \leq  ((d-1)n + 1) T(n,r) x^r 
  $$
  Solving fo $T(n,r)$ proves the lemma.  \qed
\end{proof}
We apply this lemma for $x = \frac{1}{k}$ and obtain a certain radius
$r$, for which we construct a code $\mathcal{C}$ of covering
$G$-radius $r$. Combining Lemma~\ref{lower-bound-trinom} 
with (\ref{G-runtime}), we obtain a running time of
$$
\frac{d^n k^r x^r}{(1+x+\dots+x^{d-1})^n} \poly(n) 
= \left(\frac{d(k-1)}{k}\cdot \frac{k^d}{k^d-1}\right)^n  \poly(n) \ ,
$$
and we have proven our main theorem.
\begin{theorem}
  For all $d$ and $k$, there is a deterministic algorithm
  solving $(d,k)$-CSP in time
  $$\left(\frac{d(k-1)}{k}\cdot \frac{k^d}{k^d-1}\right)^n  \poly(n) \ .$$
\end{theorem}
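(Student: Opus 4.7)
The plan is to instantiate the general framework developed in Section~\ref{section-g-ball} with $G = C_d$, the directed cycle on $d$ vertices. Since $C_d$ is vertex-transitive with out-degree $\delta = 1$, Algorithm~\ref{gcspsb} runs in time $k^r \poly(n)$ per starting assignment, and Lemma~\ref{lemma-G-CC} lets us cover $[d]^n$ with a deterministically constructible code $\mathcal{C}$ of size at most $d^n/\vol{C_d}{n}{r} \cdot \poly(n)$. Thus by (\ref{G-runtime}) the overall running time is
$$
\frac{d^n}{\vol{C_d}{n}{r}} \cdot k^r \cdot \poly(n),
$$
and the whole proof reduces to picking the radius $r$ that makes this quantity as small as possible.

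To choose $r$, I would apply Lemma~\ref{lower-bound-trinom} with the specific value $x = 1/k$. The reason for this choice is that the $x^r$ appearing in the denominator of the volume bound then equals $k^{-r}$, which cancels exactly the $k^r$ coming from the cost of a single $\gcspsb$ call. After this cancellation the running time collapses to
$$
\left(\frac{d}{1 + 1/k + 1/k^2 + \dots + 1/k^{d-1}}\right)^n \poly(n).
$$

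It then remains only to rewrite the denominator as a geometric series: $\sum_{i=0}^{d-1} k^{-i} = (k^d-1)/(k^{d-1}(k-1))$. Substituting back gives the base $d \cdot k^{d-1}(k-1)/(k^d - 1) = \tfrac{d(k-1)}{k} \cdot \tfrac{k^d}{k^d - 1}$, as claimed.

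There is essentially no remaining obstacle at this point: all the conceptually nontrivial work has been packaged into Lemma~\ref{lemma-G-CC} (existence and deterministic construction of small covering codes with respect to the $G$-distance) and Lemma~\ref{lower-bound-trinom} (generating-function volume lower bound for $C_d$-balls). The only real decision is the ``sweet-spot'' choice $x = 1/k$, which is forced by the requirement that the two exponential factors in (\ref{G-runtime}) balance cleanly; any other choice would leave a residual $k^r$-type factor and give a strictly worse bound.
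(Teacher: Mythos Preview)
Your proposal is correct and follows essentially the same route as the paper: take $G=C_d$ so that $\delta=1$, apply Lemma~\ref{lower-bound-trinom} with $x=1/k$ to obtain a suitable radius $r$, plug the resulting volume bound into (\ref{G-runtime}), and simplify the geometric series $\sum_{i=0}^{d-1}k^{-i}$ to get the claimed base. The paper's write-up is terser but makes exactly the same choices and invokes the same lemmas.
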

There is one issue we have consistently been sweeping under the rug.
We proved Lemma~\ref{lower-bound-binom} and
Lemma~\ref{lower-bound-trinom}, but never addressed the question what
radius $r$ fulfills the stated bound. For the analysis this does not
matter, since $r$ cancels out nicely.  However, if
we were to implement the algorithm, we would have to choose the right
radius. This is not difficult: In Lemma~\ref{lower-bound-trinom}, the
correct $r$ is the one maximizing $T(n,r)x^r$, and $T(n,r)$ can be
computed quickly using dynamic programming.

\section{Optimality of the Directed Cycle}
\label{section-optimality}

We will show that our analysis cannot be improved by choosing a
different vertex-transitive graph $G$ or a different radius $r$. We
ignore graphs that are not vertex-transitive because we have no idea
on how to upper bounding the running time of $\gcspsb$, not to speak
of estimating the size of a good covering code.\\

Let $G$ be a vertex-transitive graph on $d$ vertices.  For some vertex
$u \in V(G)$, we denote by $d_i$ the number of vertices $v \in V(G)$
having $d_G(u,v)=i$. Since $G$ is finite, the sequence
$d_0,d_1,\dots,$ eventually becomes $0$. Denoting the diameter of $G$
by $s$, it holds that $d_i=0$ for all $i\geq s$. If $G$ is connected
(which we do not necessarily assume), the $d_i$ add up to $d$. Since
$G$ is vertex-transitive, the $d_i$ do not depend on the vertex $u$.
Clearly, $G$ is $d_1$-regular, and $\gcspsb$ runs in time $(d_1 k)^r
\poly(n)$ on a $(d,k)$-CSP formula. How do we estimate
$\vol{G}{n}{r}$? Again we define $T(n,r) = \vol{G}{n}{r} -
\vol{G}{n}{r-1}$, i.e., the number of elements having distance {\em
  exactly} $r$ from some fixed $\alpha$.  The $T_G(n,r)$ obey the
recurrence
$$
T(n,r) = \sum_{i=0}^{s} d_i T(n-1,r-i) \ .
$$
This is easy to see: Fix $\alpha \in [d]^n$. How many $\beta$ are
there such that $d_G(\alpha,\beta)=r$? Consider the first coordinates
$\alpha_1$ and $\beta_1$. If $d_G(\alpha_1, \beta_1)=i$, then there
are $d_i$ ways to choose $\beta_1$, and the distances at the remaining
$n-1$ positions add up to $r-i$. Some moments of thought reveal
the following identity:
$$
\left(\sum_{i=0}^{s}
  d_i x^i\right)^n = \sum_{i=0}^{sn} T(n,i)x^i
$$
Before, we were interested in bounding $\vol{G}{n}{r}$ from below. Now
we want to bound it from above, because we want to argue that any code
$\mathcal{C} \subseteq [d]^n$ of covering radius
$r$ must necessarily be large, and the algorithm must be slow.\\

\begin{lemma}
  For any $n \in \N$, $r \in \{0,1,\dots,sn\}$ and any $x \in
  [0,1]$, it holds that
  $$
  \vol{G}{n}{r} \leq \frac{\left(\sum_{i=0}^{s} d_i x^i\right)^n}{x^r}
    \ .
  $$
\end{lemma}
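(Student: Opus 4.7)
The plan is to start from the generating-function identity stated just before the lemma,
$$\left(\sum_{i=0}^{s} d_i x^i\right)^n = \sum_{j=0}^{sn} T(n,j)\, x^j,$$
and to compare the left-hand side with $x^r \vol{G}{n}{r}$. Since $\vol{G}{n}{r} = \sum_{j=0}^{r} T(n,j)$, the desired inequality is equivalent to
$$x^r \sum_{j=0}^{r} T(n,j) \;\leq\; \sum_{j=0}^{sn} T(n,j)\, x^j.$$

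The key observation is simply that $x \in [0,1]$ forces $x^j \geq x^r$ whenever $j \leq r$. First I would note that all coefficients $T(n,j)$ are non-negative (they count assignments). Then I would throw away all terms with $j > r$ on the right-hand side, which only decreases it, and for the remaining terms with $j \leq r$ replace $x^j$ by the smaller quantity $x^r$; this yields
$$\sum_{j=0}^{sn} T(n,j)\, x^j \;\geq\; \sum_{j=0}^{r} T(n,j)\, x^j \;\geq\; x^r \sum_{j=0}^{r} T(n,j) \;=\; x^r\, \vol{G}{n}{r}.$$
Combining with the generating-function identity and dividing by $x^r$ (assuming $x > 0$) gives exactly the stated bound.

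The only mild subtlety is the boundary case $x = 0$: if $r = 0$ the inequality reduces to $\vol{G}{n}{0} = 1 \leq d_0^n = 1$, which holds, and for $r > 0$ one either interprets $0^r/0^r$ as a limit from above or simply restricts to $x \in (0,1]$, where the argument is clean. There is no real obstacle here; the proof is essentially just the generating-function trick already used in Lemma~\ref{lower-bound-binom} and Lemma~\ref{lower-bound-trinom}, run in the reverse direction to obtain an \emph{upper} bound on $\vol{G}{n}{r}$ rather than a lower bound on one of its summands.
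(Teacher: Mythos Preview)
Your proof is correct and follows essentially the same chain of inequalities as the paper: start from the generating-function identity, drop the terms with index greater than $r$, then use $x^j \geq x^r$ for $j \leq r$ (valid since $x \in [0,1]$) to factor out $x^r$ and recognize $\sum_{j\leq r} T(n,j) = \vol{G}{n}{r}$. The paper does not bother with the $x=0$ boundary case, but your handling of it is fine.
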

\begin{proof}
  $\left(\sum_{i=0}^{s} d_i x^i\right)^n =
  \sum_{i=0}^{sn} T(n,i)x^i \geq \sum_{i=0}^{r} T(n,i)x^i
  \geq \sum_{i=0}^{r} T(n,i)x^r 
  = x^r \vol{G}{n}{r}$, and for the last inequality we
  needed that $x \in [0,1]$, thus $x^i \geq x^r$ for $i \leq r$.
  Re-arranging terms yield the claimed bound.
  \qed
\end{proof}
Clearly any code $\mathcal{C} \subseteq [d]^n$ with $\bigcup_{\alpha
  \in \mathcal{C}}\ball{G}{r}{\alpha} = [d]^n$ must satisfy
$$|\mathcal{C}| \geq \frac{d^n}{\vol{G}{n}{r}} \ .$$
 Since $\gcspsb$ takes
time $(kd_1)^r$, the total running time is at least
$$
\frac{d^n}{\vol{G}{n}{r}} (kd_1)^r
\geq \frac{d^n x^r (kd_1)^r} { \left(\sum_{i=0}^{s} d_i x^i\right)^n} \ ,
$$
where this inequality holds for all choices of $x$. Setting 
$x = \frac{1}{kd_1}$, we see that the running time is at least
$$
\frac{d^n}{\left(\sum_{i=0}^{s} d_i k^{-i} d_1^{-i}\right)^n} \ .
$$
In a $d_1$-regular graph, the number of vertices at distance $i$ from $u$
can be at most $d_1^i$. In other words, $d_i \leq d_1^{i}$, and
the above expression is at least
$$
\frac{d^n}{\left(\sum_{i=0}^{s}  k^{-i}\right)^n} \ ,
$$
which, up to a polynomial factor, is the same as what we get 
for the directed cycle on $d$  vertices.

\section{Conclusion and Open Problems}

We can apply the same idea to Sch\"oning's algorithm: When picking a
literal $(x \ne c)$ uniformly at random from an unsatisfied constraint
of $F$ (see line 8 of $\cspsrun$), we choose a new truth value $c'$
uniformly at random from the set $\{c'\in [d] \ | \ (c,c')\in E(G)\}$.
With $G=K_d$, this is the original algorithm $\schoening$, and
surprisingly, for $G$ being the directed cycle, one obtains exactly
the same running time $(d(k-1)/k)^n\poly(n)$. Since the analysis of this
modified $\schoening$ does not introduce any new ideas, we refer the
reader to the appendix and to Andrei Giurgiu's Master's
Thesis~\cite{Giurgiu09}, which presents a general framework for
analyzing random walk algorithms for SAT. Our main open problem is
the following.

\begin{quotation}
  \em For which graph on $d$ vertices does the modified $\cspsrun$ achieve
  its optimal success probability? 
\end{quotation}
  
If we had to, we would guess that no graph can improve Sch\"oning's
algorithm. Intuitively, it does not make sense to restrict the random
choices the algorithm can make, because we have no further information
on which choice might be correct. In the deterministic case, where
every branch is fully searched, it seems to make more sense to
restrict the choices of the algorithm, since this yields an immediate
reduction in the running time of $\gcspsb$.

\ignore{
 Some graphs actually make it
worse: For example, if $d=2^\ell$ is a power to $2$, we can employ $G
= Q_\ell$, the $\ell$-dimensional boolean cube. Some minutes of
thought show that using this graph is equivalent to replacing each
$d$-ary variable by $\ell$ boolean variables, thus transforming a
$(d,k)$-CSP formula with $n$ variables into a $(\ell k)$-CNF formula
with $\ell n$ variables (in fact, I first analyzed this version, then
realized it can be viewed as a graph structure on the set of colors,
and only then did I try to analyze other graphs). Surprisingly,
replacing $d$-ary variables by boolean variables already yields a
modest improvement for
$\dantsin$, but $\schoening$ deteriorates.\\
}

\ignore{
\subsection{Random Boolean Restrictions}

Although most algorithms for $k$-SAT easily generalize to $(d,k)$-CSP,
there is a general way how to obtain a $(d,k)$-CSP algorithm from any
$k$-SAT algorithm $\mathcal{A}$: For a $(d,k)$-CSP formula $F$ on $n$
variables, restrict each variable $x_i$ to a set $\{a,b\} \subseteq
\{1,\dots,d\}$, chosen uniformly at random from all ${d \choose 2}$
pairs and view the restricted formula as a boolean $k$-CNF formula
$F'$. We call $F'$ a {\em random boolean restriction} of $F$. If $F$
is satisfiable, then with probability $(2/d)^n$, $F'$ is as well. We
solve $F'$ using $\mathcal{A}$. If $\mathcal{A}$ has a running time of
$c_k^n \poly(n)$, this yields a Monte-Carlo algorithm for $(d,k)$-CSP
running in time $(dc_k/2)^n \poly(n)$. A quick calculation shows that
for $\schoening$, we obtain the same running time whether we use the
full-fledged $(d,k)$-CSP version or random boolean restriction. This
motivates the following definition: A family $(\mathcal{A}_{d,k})_{d,k
  \in \mathbb{N}}$ of algorithms for $(d,k)$-CSP is {\em essentially
  boolean} if no $\mathcal{A}_{d,k}$ has a better running time than
solving $(d,k)$-CSP by applying $\mathcal{A}_{2,k}$ to a random
boolean restriction.  In general, any family of $(d,k)$-CSP algorithms
with a running time of $(f(k)d)^n \poly(n)$ is essentially boolean, as
for example $\schoening$ and $\dantsin$. Our modified $\dantsin$ using
directed cycles is not essentially boolean, however.\\

Of course, using random boolean restrictions to solve $(d,k)$-CSP
yields a randomized algorithm, so it seems unfair to compare such a
running time with that of $\dantsin$. However, one can efficiently
derandomize the idea of random boolean restrictions, using an idea
similar to covering $[d]^n$ with Hamming balls: A {\em $2$-box} $B
\subseteq [d]^n$ is a set of the form $B = B_1 \times \dots \times
B_n$ where each $B_i \subseteq [d]$ has two elements. Hence a random
boolean restriction can be seen as restricting the solution space to a
randomly chosen $2$-box. In a similar fashion as in~\cite{dantsin}, we
can deterministically construct family $\mathcal{B}$ of $2$-boxes such
that $\bigcup_{B \in \mathcal{B}} B = [d]^n$, and $|\mathcal{B}| \leq
(d/2)^n \poly(n)$~\cite{robin-personal}. For $d$ being even, this is
trivial, and $\mathcal{B}$ can be given explicitly. For odd $d$, we do
not see any simple construction, though. This way, a deterministic
algorithm $\mathcal{A}$ solving $k$-SAT in time $c_k^n \poly(n)$ can
be turned into a {\em deterministic} algorithm solving $(d,k)$-CSP in
time $(dc_k/2)^n \poly(n)$.
}

\section*{Acknowledgments}
Thanks a lot to Emo Welzl and Robin Moser for fruitful and pleasant
discussions.

\bibliographystyle{abbrv}
\bibliography{refs}

\newpage
\appendix

\section{Sch\"oning's Algorithm With Directed Cycles}

Can we apply the same idea to Sch\"oning's algorithm? When picking a
literal $(x \ne c)$ uniformly at random from an unsatisfied constraint
of $F$ (see line 8 of $\cspsrun$), we choose a new truth value $c'$
uniformly at random from $[d]\setminus \{c\}$. We modify this
algorithm as follows: Using a graph $G$ with vertex set $[d]$, we
choose the new color uniformly at random from the set $\{c'\in [d] \ |
\ (c,c')\in E(G)\}$. If $G = K_d$, this is nothing new. What if $G$ is
the directed cycle?  Let the $d$ colors be $0,1,\dots,d-1$ and let the
edges be $(i, i+1)$ (addition taken modulo $d$). This means that we
always change color $c$ to color $c+1$.  Let $\beta$ be a
fixed satisfying assignment and $\alpha$ be the current
(non-satisfying) assignment in the algorithm $\cspsrun$.  If $\beta$
satisfies the literal $(x\ne c)$, i.e. $\beta(x) \ne c$, then changing
the color of $x$ from $c$ to $c+1$ decreases the distance from
$\alpha$ to $\beta$ by $1$.  Otherwise, if $\beta(x)= c$, then the
distance from $\alpha$ to $\beta$ increases by $d-1$. If $C$ is a
constraint involving $k$ literals and which is unsatisfied by
$\alpha$, then with probability at least $\frac{1}{k}$ we choose a
literal that is satisfied by $\beta$, and decrease the distance by
$1$, and with probability at most $\frac{k-1}{k}$, we choose a literal
not satisfied by $\beta$, increasing the distance by $d-1$. To analyze
the algorithm, we define a Markov chain (see Figure~\ref{markov}):
\begin{figure}
  \begin{center}
    \epsfig{file=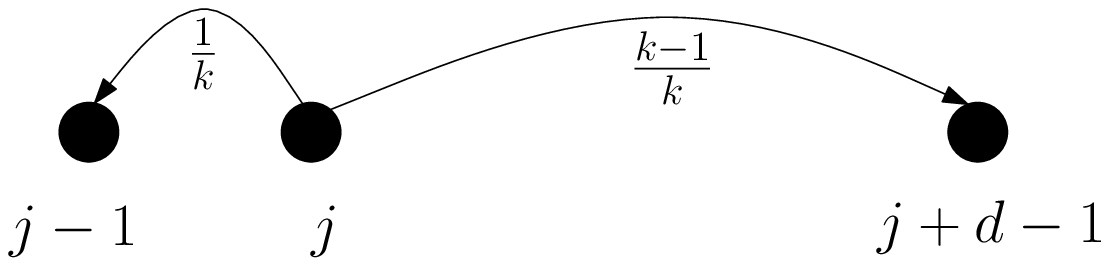,width=0.5\textwidth}
  \end{center}
  \caption{Part of the Markov Chain}
  \label{markov}
\end{figure}
The states of the Markov chain are $\N_0 \cup \{S\}$, with $S$ being a
special starting state. The states $j \in N_0$ represent the distance
from $\alpha$ to some fixed satisfying truth assignment $\beta$.  The
transition probabilities are as follows: For $0 \leq j \leq (d-1)n$,
the probability $p_{S,j}$ of going from $S$ to $j$ is
$\frac{T_G(n,j)}{d^n}$, where $T_G(n,j) = \vol{G}{n}{j} -
\vol{G}{n}{j-1}$ is the number of assignments $\alpha$ such that
$d_G(\alpha,\beta)=j$.  When taking a step from $S$ to some $j$
according to the transition probabilities, $j$ will be distributed
exactly as $d_G(\alpha,\beta)$ for $\alpha \in_{\rm u.a.r.} [d]^n$.
Furthermore, for $j \geq 1$, $p_{j,j-1}$ is $\frac{1}{k}$, and
$p_{j,j+d-1} = \frac{k-1}{k}$, and $p_{0,0}=1$. Here, we only sketch
analysis of this Markov chain. For details, please see Giurgiu's
Master Thesis~\cite{Giurgiu09}. The probability of $\cspsrun$ finding a
satisfying assignment is at least the probability of this Markov chain
reaching state $0$ after at most $cn$ steps, with $c$ being the
constant in line 2 of $\cspsrun$.  As it turns out, the probability that
we reach $0$ in at most $cn$ steps, conditioned on the event that $0$
is reached at all, is rather high.  Note that with positive (in fact,
quite large) probability, we will never reach state $0$. Therefore, to
analyze the success probability of $\cspsrun$, it suffices to lower bound
the probability that our random walk eventually reaches $0$. Let $P_j$
be the probability that a random walk starting in state $j$ eventually
reaches $0$. The $P_j$ obey the equation
\begin{eqnarray}
P_j = \frac{1}{k} P_{j-1} + \frac{k-1}{k} P_{j+d-1} \ .
\label{eqn-Pj}
\end{eqnarray}
Observe that if some $\lambda \in (0,1)$ satisfies
\begin{eqnarray}
\lambda = \frac{1}{k} + \frac{k-1}{k}\lambda^d \ ,
\label{eqn-lambda}
\end{eqnarray}
then $P_j = \lambda^j$ satisfies (\ref{eqn-Pj}). Here we would have to
show that (\ref{eqn-lambda}) has a unique ``reasonable'' solution for
each $d$, and that $\lambda^j$ is in fact the unique solution to
(\ref{eqn-Pj}). We can compute the probability that we eventually
reach $0$:
$$
    \mathbf{\rm P}[0 \textnormal{ eventually reached}]
     = \sum_{j=0}^{(d-1)n} \frac{T_G(n,j)}{d^n} \lambda^j
     = \frac{1}{d^n}(1 + \lambda + \lambda^2 + \dots + \lambda^{d-1})^n \ ,
$$
since $(1+x+x^2+\dots+x^{d-1})^n = \sum_{i=0}^{(d-1)n}T_G(n,i)x^i$.
The above expression involves a geometric series and thus equals
$\left(\frac{\lambda^d - 1}{d (\lambda - 1)}\right)^n$.
From (\ref{eqn-lambda}) we learn that $\lambda^d = \frac{k\lambda - 1}{k-1}$,
and plugging this into the previous expression yields
$$
\left(\frac{\lambda^d - 1}{d (\lambda - 1)}\right)^n = 
\left(\frac{\frac{k\lambda - 1}{k-1} - 1}{d(\lambda-1)}\right)^n
= \left(\frac{k}{d(k-1)}\right)^n \ .
$$
Now the probability that $\cspsrun$ finds a satisfying assignment is at
least $\left(\frac{k}{d(k-1)}\right)^n \frac{1}{\poly(n)}$, and if we
repeat it $\left(\frac{d(k-1)}{k}\right)^n \poly(n)$ times, we find a
satisfying assignment with constant probability (if one exists). This
is exactly the running time of Sch\"oning's algorithm we got before.
Hence we see: Running Sch\"oning with $G$ being $K_d$ or being the
directed cycle makes no difference.
\ignore{
If $d = 2^\ell$ is a power of $2$,
running it with $G = Q_\ell$ being the $\ell$-dimensional Boolean cube
actually makes the running time worse (as taking $G = Q_\ell$ is the
same as replacing each $d$-ary variable by $\ell$ boolean
variables).

This yields the following obvious question, which 
we currently cannot answer:

\begin{quotation}
  \em For which graph on $d$ vertices does the modified $\cspsrun$ achieve
  its optimal success probability? Is there a graph that works better than
  $K_d$?
\end{quotation}
  
We think that no graph can improve Sch\"oning's algorithm.
Intuitively, it does not make sense to restrict the random choices the
algorithm can make, because we have no further information on which
choice might be correct. In the deterministic case, where every branch
is fully searched, it seems to make more sense to restrict the choices
of the algorithm, since this yields an immediate reduction in the
running time of $\gcspsb$.}

\section{Constructing the Covering Code}

We show how to deterministically construct a code
$\mathcal{C}\subseteq [d]^n$ of covering radius $r$, i.e.,
$\bigcup_{\alpha \in \mathcal{C}} \ball{G}{r}{\alpha} = [d]^n$, for
$G$ being the directed cycle on $d$ vertices. The construction is just
a generalization of the one in Dantsin et al.~\cite{dantsin}.
\begin{lemma}
  Let $G$ be the directed cycle on $d$ vertices. For any $n \in \N$
  and $x \geq$, there is an $r \in \{0,\dots, (d-1)n\}$ such that
  $$\vol{G}{n}{r} \geq \frac{1}{(d-1)n+1} 
  \frac{(1+x+x^2+\dots+x^{d-1})^n}{x^r} \ ,$$  
  and there is a code $\mathcal{C} \subseteq [d]^n$ of size
  at most 
  $$
  \frac{[d]^n x^r}{(1+x+x^2+\dots+x^{d-1})^n} \poly(n) 
  $$
  which can be constructed deterministically in time
  $O(\mathcal{|C|})$.
\end{lemma}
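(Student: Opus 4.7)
The two halves of the lemma share the same $r$, and I prove them together. The first half, the volume lower bound, is identical to Lemma~\ref{lower-bound-trinom} above and its one-line generating-function proof applies verbatim: expand $(1+x+\cdots+x^{d-1})^n=\sum_{s=0}^{(d-1)n}T(n,s)x^s$, pick $r$ to maximise $T(n,r)x^r$, observe the expansion has only $(d-1)n+1$ terms so its maximum term is at least a $1/((d-1)n+1)$-fraction of the whole, and rearrange.

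For the second half I adapt the block-product construction of Dantsin et al.\ to the $G$-distance. Choose a constant $M$ depending only on $d$ and $x$, set $b:=\lceil n/M\rceil$ and $m:=\lceil n/b\rceil\le M$, and split the $r$ delivered by Part~1 into block radii $r^\ast_1,\dots,r^\ast_m$ that are as equal as possible and sum to $r$. For each $j$ I build a covering code $C^\ast_j\subseteq[d]^b$ of covering $G$-radius $r^\ast_j$ by the standard greedy procedure (add any uncovered point while one exists); a probabilistic-existence argument shows this halts with $|C^\ast_j|\le d^b\cdot b\ln d/\vol{G}{b}{r^\ast_j}$, and the greedy loop runs in $\poly(d^b)$ time. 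Setting $\mathcal{C}:=C^\ast_1\times\cdots\times C^\ast_m$ (with a harmless projection if $bm>n$), additivity of the $G$-distance across blocks guarantees covering $G$-radius $\sum_j r^\ast_j=r$, and $\mathcal{C}$ is output in time $O(|\mathcal{C}|)$.

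For the size bound I use the expansion $\vol{G}{n}{r}=\sum_{\vec{s}\,:\,\sum_j s_j\le r}\prod_j T(b,s_j)$, which is just the identity $(1+x+\cdots+x^{d-1})^n=\bigl((1+x+\cdots+x^{d-1})^b\bigr)^m$ read coefficient-wise. Log-concavity of $T(b,\cdot)$---it is the coefficient sequence of a non-negative polynomial raised to a power---implies that among all compositions of $r$ into $m$ parts the product $\prod_j T(b,s_j)$ is largest when the $s_j$ are as equal as possible, so the chosen $r^\ast_j$ achieve $\prod_j\vol{G}{b}{r^\ast_j}\ge \vol{G}{n}{r}/\poly(n)$ (there are only $\poly(n)$ compositions to sum over for constant $m$). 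Plugging this into $|\mathcal{C}|=\prod_j|C^\ast_j|\le d^n(b\ln d)^m/\prod_j\vol{G}{b}{r^\ast_j}$ and invoking Part~1 yields the claimed bound $|\mathcal{C}|\le d^n x^r/(1+x+\cdots+x^{d-1})^n\cdot\poly(n)$. The only real obstacle is choosing $M$ so that the brute-force search on each block fits inside the time budget $O(|\mathcal{C}|)$: since Part~1 forces $|\mathcal{C}|$ to be exponential in $n$ with some base $c(d,x)>1$, any $M\ge\log d/\log c(d,x)$ makes $d^b\le|\mathcal{C}|$, and this $M$ depends only on $d$ and $x$ as required.
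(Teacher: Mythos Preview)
Your block-product architecture matches the paper's: partition into a constant number $m$ of blocks of linear size $b$, cover each block separately, and take the Cartesian product. But there is a genuine gap in the per-block step. You write ``add any uncovered point while one exists'' and assert that a probabilistic-existence argument bounds the resulting code by $d^b\cdot b\ln d/\vol{G}{b}{r^\ast_j}$. This does not follow: the probabilistic argument shows only that a \emph{random} set of that size is a cover with positive probability, hence \emph{some} cover of that size exists; it says nothing about your naive greedy. The centres your greedy produces merely satisfy $\alpha_j\notin\ball{G}{r^\ast_j}{\alpha_i}$ for $i<j$, and even for the symmetric Hamming distance this only forces the radius-$\lfloor r^\ast_j/2\rfloor$ balls around them to be disjoint, yielding at best $|C^\ast_j|\le d^b/\vol{G}{b}{\lfloor r^\ast_j/2\rfloor}$, which is exponentially weaker than what you need. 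The correct deterministic step---and what the paper actually does---is the greedy \textsc{Set Cover} approximation: repeatedly add the centre whose ball covers the most currently-uncovered points. That algorithm returns a cover of size at most $O(\ln d^b)$ times optimal, and combined with the probabilistic existence bound this gives $|C^\ast_j|\le d^b/\vol{G}{b}{r^\ast_j}\cdot\poly(b)$ as required.

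A secondary remark: your log-concavity detour to establish $\prod_j\vol{G}{b}{r^\ast_j}\ge\vol{G}{n}{r}/\poly(n)$ works but is unnecessary. The paper avoids it by applying the generating-function volume bound (your Part~1) directly to block length $n/m$, obtaining one block radius $r'$; the product code then has covering radius $mr'$ and size $\bigl(d^{n/m}x^{r'}/(1+\cdots+x^{d-1})^{n/m}\cdot\poly(n)\bigr)^m = d^n x^{mr'}/(1+\cdots+x^{d-1})^n\cdot\poly(n)$ immediately, with no further combinatorics.
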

\begin{proof}
  The proof idea is as follows: A probabilistic argument shows that a
  code $\mathcal{C}^*$ of claimed size exists (one obtains
  $\mathcal{C}^*$ by sampling random points in $[d]^n$), and then one
  invokes a greedy polynomial time approximation algorithm for the Set
  Cover problem (see~\cite{Hochbaum}, for example).  This returns a
  code of size at most $|\mathcal{C}^*| \poly(n)$. The problem is that
  this instance of Set Cover has a ground set of size $d^n$, and $d^n$
  sets to choose from, thus the approximation algorithm will take at
  least $d^n$ steps. As in Dantsin et al.~\cite{dantsin}, we solve
  this problem by partitioning our $n$ variables into $b$ blocks of
  length $n/b$ each, where $b$ is a constant, depending on $d$ but not $n$.\\

  Let us be more formal. We first construct a covering code for
  $[d]^{n/b}$. By Lemma~\ref{lower-bound-trinom}, we know that for any
  $x \geq 0$, there is an $r \in \{0,\dots,(d-1)n/b\}$ such that
  $$
  \vol{G}{n/b}{r} \geq \frac{1}{(d-1)n+1}\frac{(1+x+\dots+x^{d-1})^{n/b}}
  {x^r} \ .
  $$
  Using this $r$, we choose a set $\mathcal{C}^* \subseteq [d]^{n/b}$
  by randomly sampling $\frac{\ln(d^{n/b}) d^{n/b}}{\vol{G}{n/b}{r}}$
  elements from $[d]^{n/b}$, uniformly at random with replacement.
  This is only a feature of the proof -- the sampling is not part of
  our deterministic construction. For any fixed $\beta \in
  [d]^{n/b}$, it holds that
  $$
  {\rm P}[\beta \not \in \bigcup_{\alpha \in \mathcal{C}^*} 
  \ball{G}{r}{\alpha}] = 
  \left(1 - \frac{\vol{G}{n/b}{r}}{d^{n/b}}\right)^{|\mathcal{C^*}|} 
  < e^{-\ln(d^{n/b})} = d^{-n/b} \ .
  $$
  By the union bound, we see that with non-zero probability, no
  assignment $\beta$ is uncovered, and thus there exists a code
  $\mathcal{C}^*$ of desired size and covering radius $r$. We
  construct an instance of Set Cover: The ground set is $[d]^{n/b}$,
  and the set system consists of all $\ball{G}{r}{\alpha}$ for $\alpha
  \in [d]^{n/b}$. The deterministic polynomial-time approximation
  algorithm will in time $\poly(d^{n/b})$ find a code $\mathcal{C}
  \subseteq [d]^{n/b}$ of size $O(|\mathcal{C}^*| n)$. We define
  $\mathcal{C}' \subseteq [d]^n$ by $\mathcal{C}' := \mathcal{C}^b$,
  the $b$-fold Cartesian product. It is easy to see that
  $$
  \bigcup_{\alpha \in \mathcal{C}'} \ball{G}{rb}{\alpha} = [d]^n 
  $$
  and 
  $$
  |\mathcal{C}'| = |\mathcal{C}|^b \leq 
  \frac{d^n}{x^{rb}}{(1+x+\dots+x^{d-1})^n} \poly(n)^b \ .
  $$
  By choosing $b$ large enough, although still constant, we can make
  sure that the running time of the approximation algorithm is at most
  $|\mathcal{C}'|$. This concludes the proof.
  \qed
\end{proof}
Actually the proof works as well for arbitrary vertex-transitive
graphs, not only directed cycles, but the formulas become uglier.

\end{document}